\newcommand{\shorteq}{%
  \settowidth{\@tempdima}{-}% Width of hyphen
  \resizebox{\@tempdima}{\height}{=}%
}
\newcommand{\D}{\mathrm{d}}
\newcommand{\I}{\mathrm{i}}
\newcommand{\RE}[1]{\mathrm{Re}\{#1\}}
\newcommand{\IM}[1]{\mathrm{Im}\{#1\}}
\DeclarePairedDelimiterX{\barpair}[2]{(}{)}{%
  #1\;\delimsize\|\;#2%
}
\newtheorem{theorem}{Theorem}
\newcommand{\opt}{{p_\mathrm{hack}^\mathrm{opt}}}
\newcommand{\PG}{{p_\mathrm{hack}^\textsc{PG}}}
\newcommand{\ME}{{p_\mathrm{hack}^\textsc{ME}}}
\begin{document}

\title{Repeated Extraction of Scrambled Quantum Data: Sustainability of the Hayden-Preskill Type Protocols}

\author{Seok Hyung Lie}
\author{Yong Siah Teo}
\author{Hyunseok Jeong}\email{h.jeong37@gmail.com}
\affiliation{%
 Department of Physics and Astronomy, Seoul National University, Seoul 08826, Republic of Korea
}%

%\date{2021-10-20}

\begin{abstract}
We {introduce and study} the problem of scrambler hacking, which is the procedure of quantum-information extraction from and installation on a quantum scrambler given only partial access. This problem {necessarily emerges from} a central topic in contemporary physics --- information recovery from systems undergoing scrambling dynamics, such as the Hayden--Preskill protocol in black hole studies {--- because one must replace quantum data with another when extracting it due to the no-cloning theorem.} {For large scramblers, we supply analytical formulas for the optimal hacking fidelity, a quantitative measure of the effectiveness of scrambler hacking with limited access}. In the two-user scenario where Bob attempts to hack Alice's data, we find that the optimal fidelity converges to $64/(9\pi^2)\approx0.72$ with increasing Bob's hacking space relative to Alice's user space. {We apply our results to the black hole information problem and show that the limited hacking fidelity implies the reflectivity decay of a black hole as an information mirror, which questions the solvability of the black hole information paradox through the Hayden-Preskill type protocol}. 
\end{abstract}

\maketitle

In many-body quantum systems, due to rapid and complex interaction between subsystems, initially localized quantum information dissipates quickly and spreads throughout the whole system. This delocalization of quantum information is called quantum scrambling, and retracting scrambled quantum information is one of the most important central topics of contemporary physics \cite{sekino2008fast,hosur2016chaos,blok2021scrambling}. From the perspective of the decoupling approach \cite{dupuis2010decoupling} to quantum information, perfect recovery is equivalent to implementing a quantum channel without leaking information to a particular subsystem, which is also studied in the context of catalysis of quantum randomness, quantum secret sharing and quantum masking \cite{boes2018catalytic, wilming2020entropy, lie2019unconditionally, lie2020randomness, lie2020uniform}.

One of the most important examples of recovery of quantum information from quantum scrambler is the Hayden--Preskill protocol for recovering quantum information from the Hawking radiation of old black holes \cite{page1994black, hayden2007black, bao2021hayden, cheng2020realizing, yoshida2019firewalls}. The facts that black hole evaporates by emitting  Hawking radiation which is predicted to be semi-classical thermal radiation and that no quantum information can be destroyed because of the unitarity of time evolution in quantum mechanics lead us to the famous black hole information paradox. The Hayden--Preskill protocol proposes a resolution for this paradox. The protocol is still under active research as its optimal decoding map is not completely understood and requires creative construction~\cite{yoshida2017efficient}.

Although the Hayden--Preskill protocol was proposed for extraction of quantum data from black holes, it can be applied to any quantum scrambler that allows attachment and detachment of subsystems. The setting of the protocol is as follows.  Suppose that Alice inputs a piece of quantum data into a part of the input register of a multipartite unitary operator (`scrambler') with a publicly known architecture. Bob acquires an access to a part of input/output ports of the network, and Bob attempts to extract as much data from Alice as possible. 

The Hayden--Preskill protocol says that, if Bob inputs a highly entangled state and the multipartite unitary operator has strong scrambling property, then by collecting a little more than $k$ qubits of output of the unitary operator one can successfully extract $k$ qubits of quantum data prepared by Alice \cite{hayden2007black}. Old black holes are believed to satisfy the aforementioned conditions, thus, old black holes `act like information mirrors'. This is the reason why the Hayden--Preskill protocol could possibly resolve the black hole information paradox.

%Advancements in quantum technologies have triggered anticipations of a new quantum-computing era, where each quantum computer can be connected either to a quantum network \cite{yurke1984quantum, elliott2002building, simon2017towards}, or to the quantum internet \cite{kimble2008quantum, wehner2018quantum, pirandola2016physics, caleffi2018quantum, castelvecchi2018quantum}. In such a delocalized quantum-computation setting, a third party can be granted access to a part of a quantum network. Suppose that Alice operates such a network with an architecture that is publicly given by a multipartite unitary operator. Alice grants Bob, a client, access to an input port of the network. In this situation, Bob might be a hacker who attempts to extract as much data from Alice as possible.

 For the Hayden--Preskill protocol to be a true solution to the black hole information paradox, however, old black holes must function as information mirrors \textit{consistently}. In other words, one should be able to recover one qubit after another without limit, not just a few initial qubits. However, the protocol seemingly consumes entanglement in the process, thus one can naturally question if this protocol can be repeated for the extraction of the next qubits.
 
 Moreover, by the no-cloning theorem~\cite{wootters1982single}, Bob cannot simply copy out Alice's quantum data, but has to replace it with another. For the case of many-body systems such as black holes, the installed quantum data will be fed into the next round of scrambling as a part of input. Therefore, one might wonder if installing a certain piece of quantum data into quantum scramblers can enhance the performance of the data extraction for the next round, so that the \textit{sustainability} of quantum data extraction from quantum scramblers is better.
 
 Significance of quantum data installation in the studies on the Hayden--Preskill type protocols has been generally overlooked. For example, in the proposal of efficient decoding scheme for the Hayden--Preskill protocol by Yoshida and Kitaev \cite{yoshida2017efficient}, the remnant entangle state after decoding is only referred to as ``arbitrary state'' (in Figure 1 or Ref. \cite{yoshida2017efficient}) and no attention was given to its role in data extraction. In this work, we focus on this unnoticed, yet important problem of installing quantum data into the target system and how well this installation can be implemented at the same time with data extraction.

{\it Hayden--Preskill protocol.}--- The original Hayden--Preskill protocol can be stated as follows. An old black hole (denoted as $B$) is believed to be maximally entangled with all the Hawking radiation it has emitted (denoted as $B'$) so far. Now let $A$ be an infalling object. Under the assumption that black holes are undergoing rapid scrambling interaction  \cite{sekino2008fast}, we can say that systems $AB$ undergo a bipartite unitary operator $U:AB\to KL$. After this interaction, a part of output systems, $L$, whose size is slightly larger than $A$ is emitted from the black hole. By applying a suitable recovery map $R$ on $B'L$, one can recover the quantum state of $A$ almost perfectly \cite{hayden2007black}. One might interpret that quantum information is not destroyed in the process, so the black hole information paradox is resolved, at least from the perspective of the outer observer.

However, after this data extraction, the remaining black hole interior $K$ can interact with another infalling system $A_1$ and subsequently emit Hawking radiation $L_1$. What happens to the quantum information of $A_1$? Although systems $BB'$ were initially maximally entangled, the quantum correlation between the black hole and the data extractor may be degraded over time as the data extraction continues. The sheer fact that the size of $L$ must be larger than that of $A$ for accurate data recovery alone suggests that entanglement is consumed in the process. Is the Hayden--Preskill protocol sustainable after many rounds of data extraction?

The discussion so far can be applied to any quantum scrambler $U$ that behaves similarly. In the Hayden-Preskill setting, the initial maximal entanglement between a black hole and the systems outside is built through the natural Hawking radiation, but for general quantum scramblers, one may try to optimize data extraction by establishing a certain entangled state instead of the maximally entangled state. Also, as discussed before, one must install a certain quantum state in the target system when extracting quantum data because of the no-cloning theorem. Hence, one may want to leave some particular quantum state in the scrambler after data extraction. Would it be possible?

{\it Quantum-scrambler hacking}.--- Let us consider a general scrambler instead of a black hole. The quantum scrambler, accessed by only two users, Alice and Bob, is described as a $d_Ad_B$-dimensional unitary operator $U$ on systems $AB$.  After the interaction, the joint system is decomposed into $K$ and $L$, possessed by Alice and Bob, respectively. Note that it implies that $d_Ad_B=d_Kd_L$.

Alice inputs her secret quantum state in $A$, and Bob tries to extract as much quantum information stored in $A$ as possible and replace it with an arbitrary quantum state of his choice. This is done by feeding a `probe' quantum state that is going to interact with $A$ and applying a recovery map to the output state to simulate the \texttt{SWAP} operation \cite{garcia2013swap}.

However, if this task can be done with error $\epsilon$ (measured by the average infidelity of pure state inputs), then the unitary operator $U$ itself should be close to the \texttt{SWAP} operator (up to local operations) with error $\epsilon$ and vice versa. It is thus impossible to substitute quantum data through a non-\texttt{SWAP} operator~(See Appendix \ref{app:imp}).

\begin{figure}
	\centering
	\includegraphics[width=0.95\columnwidth]{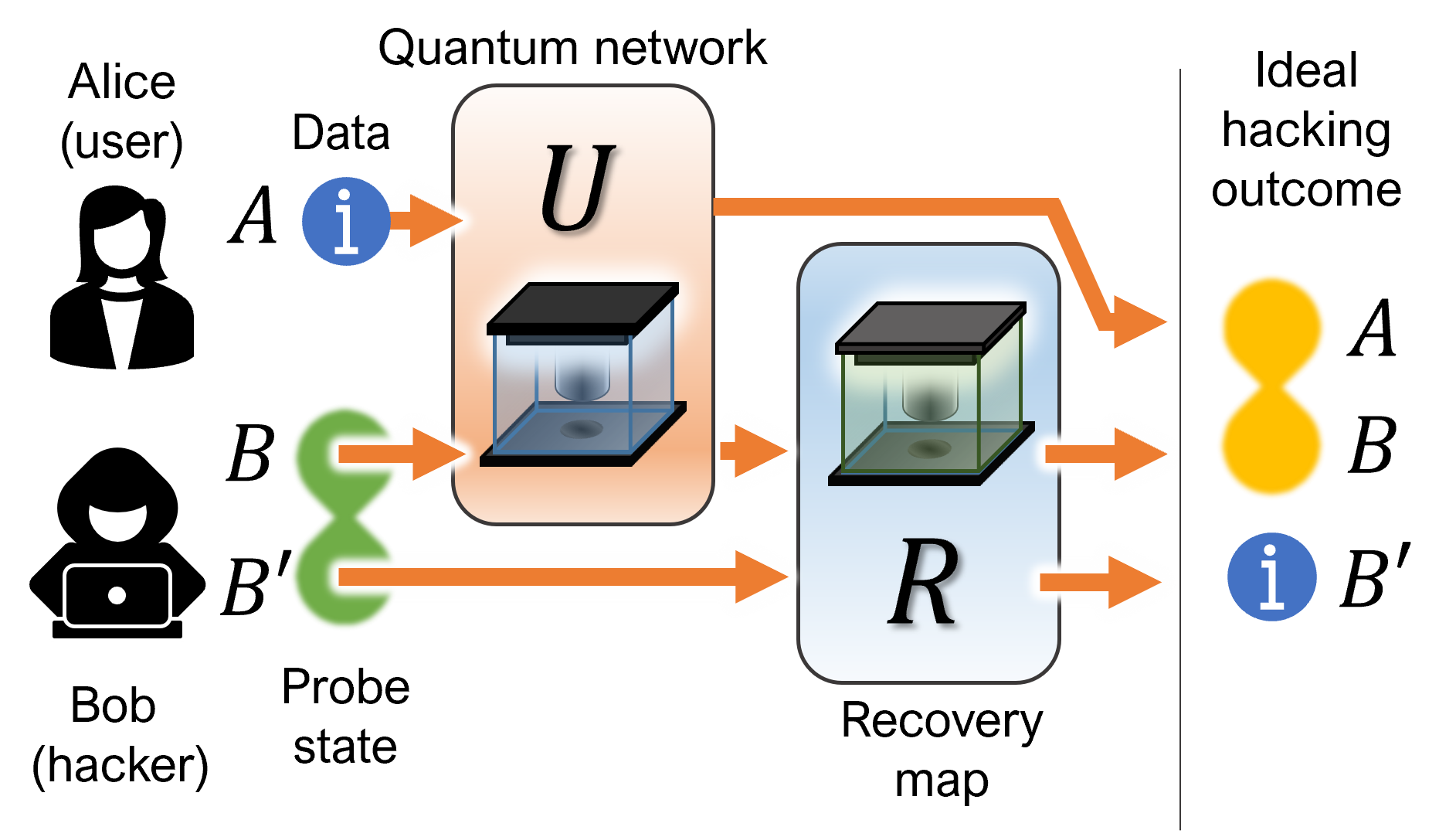}
	\caption{Schematic diagram of scrambler hacking of a unitary process $U$ (quantum scrambler or quantum computer) in the two-user scenario. Ideally, Bob, the hacker, would desire to extract Alice's information and plant a part of a maximally entangled state for the next quantum computation.}
	\label{fig:QHscheme}
\end{figure}

The next optimal strategy for Bob is to build as much correlation as possible with the target system $A$ while extracting quantum information out of it (See Fig.~\ref{fig:QHscheme}). This is because building correlations allows the extraction of quantum information from the next (undecided) computation step provided Bob has access to a part of the current computation output. In general, Bob possesses a reference system $B'$ ($d_B=d_{B'}$) and prepares an entangled input probe state $\ket{\phi}_{BB'}$. Bob's goals are, therefore, to extract the \emph{input} information stored in $A$, and install an \emph{output} maximally entangled state in $AB$. The latter can be interpreted as preparation of extraction of quantum data of \textit{future} quantum computation on $A$~\cite{horodecki2005partial, horodecki2005quantum}, because having a maximally entangled state with the target system yields the maximum side information.

Our protocol is set in the following situation.

$(i)$ A bipartite unitary operator $U:{AB\to KL}$ is randomly chosen, and Bob is informed about it.

$(ii)$ Alice prepares a quantum state in $A$ that Bob wants to extract. Bob prepares a quantum state $\ket{\phi}_{BB'}$ (`probe state') with an ancillary system $B'$.

$(iii)$ Unitary operator $U$ acts on $AB$. Alice retains system $K$ and Bob retains system $L$.

$(iv)$ Bob applies a recovery unitary operator $R$ on $LB'$. The output system decomposes into $A''$, where the extracted quantum state of $A$ should be prepared at, and the rest which should be correlated with $K$.

$(v)$ Another quantum state is prepared in $A_1$, and another random bipartite unitary operator $U_1$ on $A_1B_1\to K_1L_1$ where $B_1:=K$ is chosen and known to Bob. After the action of $U_1$, Alice retains system $K_1$ and Bob acquires system $L_1$. Bob applies a recovery map $R_1$ on $KL_1$ and Bob repeats similar steps.

Steps $(i)$ to $(iv)$ form the first round of task, and as step $(v)$ explains, similar rounds are repeated except that the probe state is not prepared by Bob, but is automatically prepared from the previous round. We will call the task that tries to achieve the both goals \textit{quantum-scrambler hacking} because of its resemblance to conventional data hacking in which a hacker extracts a piece of data and install another on the remote system.

To achieve both goals, Bob applies a unitary recovery map $R$ on systems $BB'$. Since the input data in $A$ is unknown, for the purpose of evaluating Bob's strategy following the standard approach \cite{hayden2007black, yoshida2017efficient}, we assume that it is in a maximally entangled $\ket{\psi}_{AA'}$ state with an environment $A'$ ($d_A=d_{A'}$), where $\ket{\psi}_{XY}=\sum_{i=1}^{d'}\ket{ii}_{XY}/\sqrt{d'}$ 
and $d'=\min\{d_X,d_Y\}$ for any systems $XY$. For this case, successful data extraction from $A$ means entanglement swapping; Bob should have $\ket{\psi}_{A'B'}$ in the final stage. The fidelity with maximally entangled inputs is known to be a monotone function of the average fidelity of a pure-state input \cite{horodecki1999general}. Consequently, we define the scrambler hacking fidelity as
\begin{equation} 
    \label{eqn:phack}
    p_{\mathrm{hack}}:=|\bra{\psi}_{AB'}\bra{\psi}_{A'A_1}R_{LB'}U_{AB}\ket{\psi}_{AA'}\ket{\phi}_{BB'}|^2.
\end{equation}

Since the fidelity never decreases under a partial trace, $p_\mathrm{hack}$ serves as a lower bound for both fidelities of the extracted quantum data (systems $A'B'$) and implemented entangled state (systems $AB$). We can parametrize any bipartite entangled pure state $\ket{\phi}_{BB'}$ with an operator $\chi$ acting on system $B'$ such that $\ket{\phi}_{BB'}=\sum_i \ket{i}_B\otimes \chi\!\ket{i}_{B'}$  with $\|\chi\|_2=1$. Here $\|X\|_p:=(\Tr|X|^p)^{1/p}$, where $|X|:=\sqrt{X^\dag X}$, is the Schatten $p$-norm. With these, Eq.~(\ref{eqn:phack}) is simplified to
\begin{equation} \label{eqn:PVhack}
    p^{(R,\chi)}_{\mathrm{hack}}=|\Tr[R(I_L\otimes \chi)U^o]|^2/(d_A^2 d_K).
\end{equation}
Here, the (generally non-unitary) map $U^o : AA' \to BB'$ is represented by a matrix, understood as a tensor, formed by cyclically rotating the indices of $U$ clockwise by one position---${U^o}^{ij}_{kl}:=U^{ki}_{lj}$. Here, $X^{ij}_{kl}:=\bra{ij}X\ket{kl}$ in the computational basis~\cite{SM_Qhack}. This amounts to {a clockwise \mbox{($\pi/2$)-rotation} of $U$} in tensor-network diagrams~\cite{montangero2018introduction, landsberg2011geometry}, and is closely related to tensor reshuffling~\cite{zyczkowski2004duality, miszczak2011singular, bruzda2009random}. Note that $\|U^o\|_2=\|U\|_2=\sqrt{d_Ad_B}$~\bibnote{Although $R$ is $d_B^2\times d_B^2$, it only acts on a $d_A^2$-dimensional subspace $\Im \{(I_B \otimes \chi)U^o\}$ to the right and $(\mathrm{Ker } \{U^o\})^\perp$ to the left, so that we may treat $R$ either as a rank-$d_A^2$ partial unitary matrix or a $d_A^2 \times d_B^2$ coisometry without losing generality.}.

\begin{figure}
	\centering
	\includegraphics[width=0.95\columnwidth]{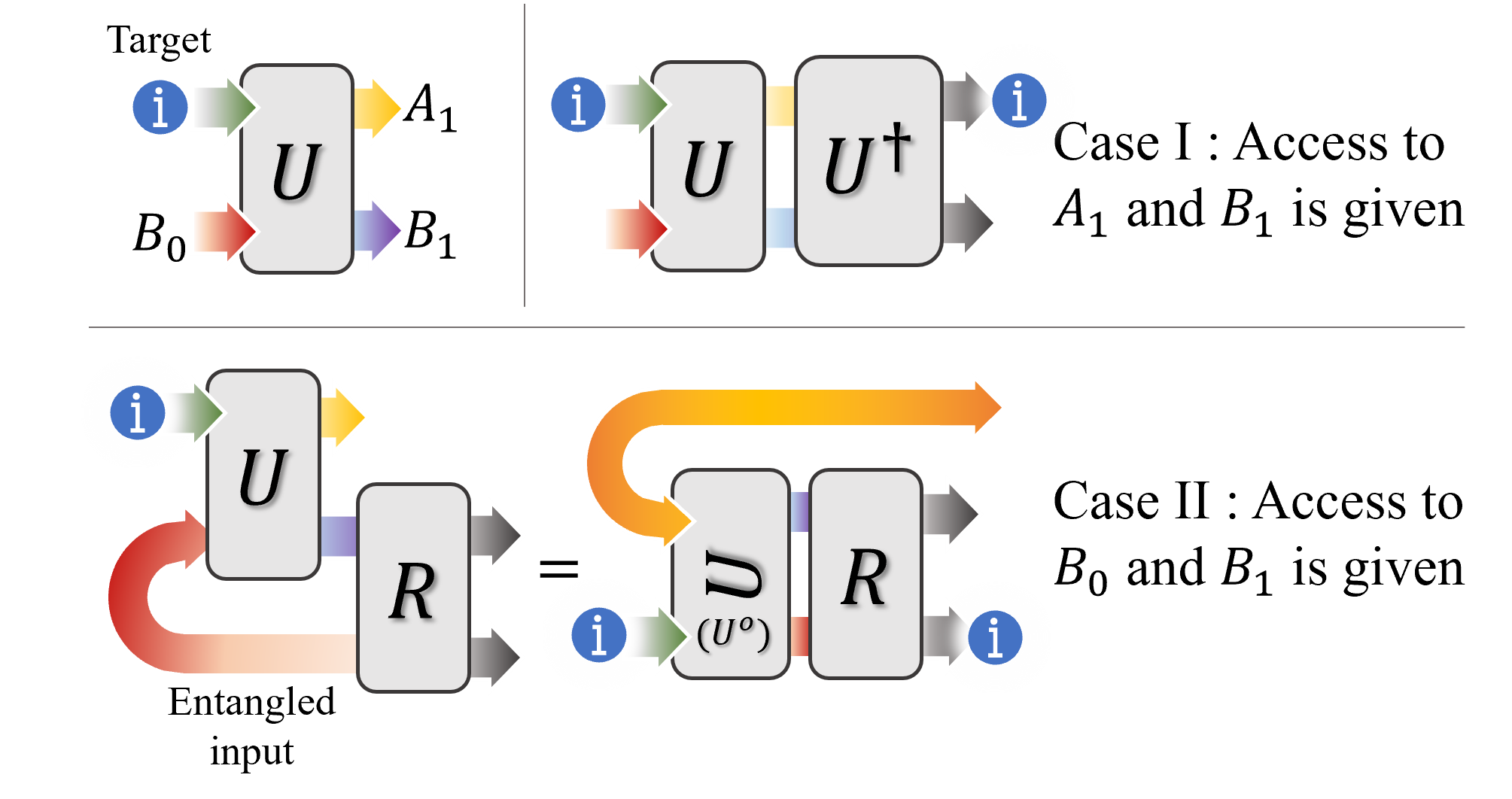}
	\caption{scrambler hacking and {the} operational meaning of $U^o$. {When all output ports of the scrambler $U$ are fully accessible to the attacker~(Case~I), data extraction is as simple as applying the complete recovery map $U^\dag$. This is in contrast to the more realistic situation of scrambler hacking~(Case~II) in which the hacker only has \emph{partial access} to the input and corresponding output ports of $U$. In this case, hacking amounts to inverting a possibly non-unitary operator---the ($\pi/2$)-rotated $U$ or~$U^o$---with a unitary recovery map $R$. Therefore, the degree of unitarity of $U^o$ is directly related to the performance of scrambler hacking.}}
	\label{fig:Qhack}
\end{figure}

Each pair $(R,\chi)$ constitutes a hacking strategy for Bob. For a given $\chi$, which is identical to fixing Bob's probe state, the optimal unitary recovery~$R$ is the one that gives the polar decomposition $(I_L\otimes \chi)U^o=R^\dag|(I_L\otimes \chi)U^o|$. This leads to
\begin{equation}
    p^{(\chi)}_{\mathrm{hack}}=\max_R\, p^{(R,\chi)}_{\mathrm{hack}}=\|(I_L\otimes \chi)U^o\|_1^2/{(d_A^2 d_K)}\,,
    \label{eqn:phack_optR}
\end{equation}
which is equivalent to inverting a possibly non-unitary operator with a unitary one~\cite{lesovik2019arrow}.

This leaves us the problem of finding an optimal $\chi$ that achieves the largest hacking fidelity. A natural candidate would be $\chi=I_B/\sqrt{d_B}$, which corresponds to a maximally entangled probe state. Equation~(\ref{eqn:phack_optR}) then immediately yields the fidelity $p_{\mathrm{hack}}^\textsc{me}=\|U^o\|_1^2/{(d_A^{2}d_Bd_K)}$, which is a \emph{unitarity measure} of $U^o$ as $\|U^o\|_1$ is the maximal inner product of $U^o$ and an isometry. With this strategy, perfect hacking ($p_{\mathrm{hack}}^\textsc{me}=1$) is only possible when $U^o$ is proportional to an isometry---$U^{o\dag} U^o={(d_B/d_K)I_{AA'}}$. {Unitary operators $U$ with such a property are known to be \textit{dual-unitary} in the studies of quantum lattice models \cite{kos2021correlations, bertini2019exact,aravinda2021dual}, and we give a new operational meaning to them as \textit{completely hackable} unitary operators.} On the contrary, $p_{\mathrm{hack}}^\textsc{me}$ reaches its minimum $1/d_A^2$ when $U^o$ is rank-1, which happens if $U=I_A\otimes I_B$, for instance. This value serves as a lower bound of the optimal hacking fidelity {and shows that the degree of unitarity of $U^o$ directly affects the performance of scrambler hacking. (See FIG. \ref{fig:Qhack}.)}

Physical intuition may lead to the putatively obvious conclusion that a maximally entangled probe state is optimal for scrambler hacking. As it turns out, this is, however, not the case in general. For example, for a qubit-qudit controlled unitary operator given as $U_c=I_A\otimes\dyad{0}_B+X_A\otimes(I_B-\dyad{0}_B)$, with the Pauli $X$ operator acting on $A$, $p_{\mathrm{hack}}^\textsc{me}$ is smaller than the $p^{(\chi)}_{\mathrm{hack}}$ with $\chi=(\dyad{0}_B+\dyad{1}_B)/\sqrt{2}$. 

To maximize $p^{(R,\chi)}_\mathrm{hack}$ in Eq.~(\ref{eqn:PVhack}), recalling that $\|\chi\|_2=1$, we may invoke the Cauchy--Schwarz inequality, $|\Tr_{B'}[\chi \Tr_B[U^oR]]|\leq \|\Tr_B[U^oR]\|_2$. This bound is saturated when $\chi=\Tr_B[R^\dag U^{o\dag}]/\|\Tr_B[U^oR]\|_2$. Hence, the true optimal hacking fidelity reads
\begin{equation} \label{eqn:opthack}
    p_{\mathrm{hack}}^{\mathrm{opt}}=\max_R\|\Tr_B[U^o R]\|_2^2/d_A^3,
\end{equation}
where the maximization is over all $d_A^2 \times d_B^2$ coisometry operator $R$~($RR^\dag=I_{AA'}$). By exploiting the polar decomposition once more, a natural choice of $R$ is $U^oR=|U^{o\dag}|$ and yields the fidelity $p^{\mathrm{PG}}_{\mathrm{hack}}={\|\Tr_B|U^{o\dag}|\|_2^2}/{d_A^3}$. As we shall soon demonstrate that this hacking strategy is near-optimal, we will call this the ``pretty good'' (PG) strategy. As this strategy also outperforms that using a maximally entangled probe state, the following inequalities hold:
\begin{equation} \label{eqn:bds}
    p^{\textsc{ME}}_{\mathrm{hack}}\leq p^{\mathrm{PG}}_{\mathrm{hack}} \leq p^\mathrm{opt}_{\mathrm{hack}}\,.
\end{equation}

Note that both ME and PG strategies are \emph{optimal} when $U^o$ is proportional to an isometry $(p^{\mathrm{ME}}_{\mathrm{hack}}=p_\mathrm{hack}^\mathrm{PG}=p_\mathrm{hack}^\mathrm{opt}=1)$. Conversely, if $p_\mathrm{hack}^\mathrm{opt}\approx 1$, then the output state $R_{BB'}U_{AB}\ket{\psi}_{AA'}\ket{\phi}_{BB'}$ is almost maximally entangled with respect to the partition $AA'|BB'$, i.e., $S(AA')\approx 2\log_2 d_A$. Thus, the von Neumann entropy of system $B'$ is lower bounded as $S(B')\geq S(AA')-S(B)\gtrsim 2\log_2 d_A - \log_2 d_B$. Considering that $S(B')\leq \log_2 d_B$, it follows that nearly perfect hacking is possible only when $d_B \gtrsim d_A$. Moreover, if $d_B\approx d_A$, then the probe state is forced to be near-maximally entangled, i.e. $S(B')=S(|\chi_{\mathrm{opt}}|^2)\approx \log_2 d_B$.

Especially, If $d_A=d_B$, the inequality ${1-p_\mathrm{hack}^\textsc{ME}}\leq 4({1-p_\mathrm{hack}^\mathrm{opt}})$ holds and it implies that near-perfect hacking $(p_\mathrm{hack}^\mathrm{opt} \approx 1)$ is possible only when $U^o$ is nearly unitary $(p_\mathrm{hack}^\textsc{ME} \approx 1)$~(See Appendix~\ref{app:dual}). {This suggests that properties of the ostensibly abstract $(\pi/2)$-rotated unitary operator $U^o$, not only those of $U$ itself, directly reflect the underlying operational meaning of scrambler hacking, namely the `inversion of a bipartite unitary operator from sideways in tensor network', as depicted in Fig.~\ref{fig:Qhack}.}

We remark that the apparently harder problem of finding optimal strategy of scrambler hacking, which aims for quantum data installation in addition to data extraction, is actually computationally equivalent to finding an optimal choice of decoding of the Hayden--Preskill-type protocols. Hence, scrambler hacking costs no more than usual quantum data extraction in both optimization and physical implementation. See Appendix~\ref{app:two-qubit} for the detailed discussion.

\begin{figure}[t]
	\centering\includegraphics[width=0.95\columnwidth]{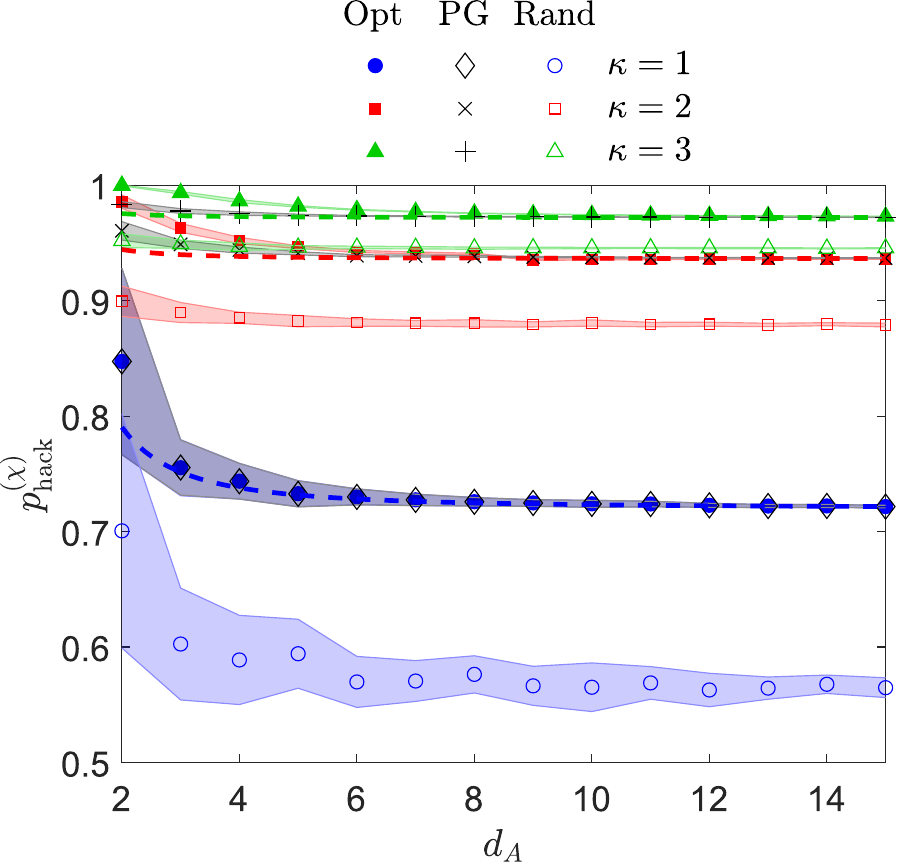}
	\caption{\label{fig:avg_phackopt}{Averaged quantum-scrambler hacking performance (over 20 randomly-generated Haar unitary scramblers {$U$}) featuring the optimal strategy~(Opt) \emph{via}~\eqref{eqn:opthack}, the PG strategy with $\widetilde{\chi}$, and a random one~(Rand) using an arbitrarily-chosen probe state. Without loss of generality, we show graphs only for $\kappa\geq1$. When $\kappa=1$ (corresponding to the Hayden--Preskill scenario), PG is almost the same as Opt in hacking performance. As $\kappa$ increases,  $p^\mathrm{opt}_\mathrm{hack}\rightarrow\mathcal{I}_{{\kappa\gg1}}^2\rightarrow1$. All theoretical dashed curves are computed with~\eqref{eqn:asymp}.}}
\end{figure}

{\it Optimal scrambler-hacking performance}.---For a quantum scrambler described by a generic unitary $U$ and an optimal hacking strategy defined by optimizing $R$ and $\chi$, the hacking fidelity $p_\mathrm{hack}$ defined in Eq.~\eqref{eqn:phack} reaches an optimal value $p^\mathrm{opt}_\mathrm{hack}$ stated in Eq.~\eqref{eqn:opthack}. As previously discussed, the optimal probe state~($\chi_\mathrm{opt}$) for $p^\mathrm{opt}_\mathrm{hack}$ is generally not maximally entangled ($\chi_\mathrm{opt}\neq I_B/\sqrt{d_B}$). While the general solution of $\chi_\mathrm{opt}$ for the maximization problem in Eq.~\eqref{eqn:opthack} has no known analytical form, {we propose two numerical methods to acquire $\chi_\mathrm{opt}$ in Appendix~\ref{app:numer}: a repeated iteration of the corresponding extremal equation, and a gradient-ascent algorithm~\cite{teo2011mlme,teo2011adaptive}.} 

Interestingly, one can calculate an analytical form of $p^\mathrm{opt}_\mathrm{hack}$ for sufficiently large dimensions{; more specifically---$d_Ad_K\gg1$.} To do this, we observe that a maximally entangled probe state is asymptotically optimal for {quantum-scrambler hacking}---$\chi_\mathrm{opt}\rightarrow I_B/\sqrt{d_B}$. This follows from the fact that the reduced state of any high-dimensional pure state approaches the maximally mixed state~\cite{page1993entropy}, which then implies that $p^\mathrm{opt}_\mathrm{hack}\rightarrow\| U^o\|_1^2/{(d_A^2d_Bd_K)}$. We shall consider $U$ as a random unitary operator distributed according to the Haar measure of the unitary group. Using properties of this measure and results from random matrix theory~\cite{mehta2004random,marcenko1967eigenvalues,SpFuncBk}, {in the scenario where only Alice and Bob are influenced by the action of} a generic quantum scrambler{, we have the asymptotic Haar-averaged formula for $\kappa\equiv \sqrt{d_Bd_L/(d_Ad_K)}=d_B/d_K\geq1$,}
\begin{gather}
	\overline{p^{\mathrm{opt}}_\mathrm{hack}}\approx\mathcal{I}_\kappa^2+(1-\mathcal{I}_\kappa^2)/{(d_Ad_K)}\,,\nonumber\\
	\,\,\mathcal{I}_\kappa={}_2\mathrm{F}_1\left({2^{-1},-2^{-1};2;\kappa^{-2}}\right)\,.
	\label{eqn:asymp}
\end{gather}
with {${}_2\mathrm{F}_1(\,\cdot\,,\,\cdot\,;\,\cdot\,;\,\cdot\,)$} being the hypergeometric function~\cite{SM_Qhack}. {In the specific circumstance where $U$ is the interaction unitary operator for the Hayden--Preskill scenario, we have $\kappa=1$, so that} $\mathcal{I}_1=8/(3\pi)$ or $\overline{p^{\mathrm{opt}}_\mathrm{hack}}\approx 0.72$. If $\kappa<1$, we instead have $\overline{p^{\mathrm{opt}}_\mathrm{hack}}\approx{\kappa^2\mathcal{I}_{1/\kappa}^2+(1-\mathcal{I}_{1/\kappa}^2)/(d_Ad_K)}$. 

Figure~\ref{fig:avg_phackopt} shows the performances of three hacking strategies with optimal recovery $R$ [see \eqref{eqn:phack_optR}]. The results indicate that efforts in using optimal probe states for a given scrambler $U$ do pay off with a much higher hacking fidelity compared to all other random choices of Bob's probe state. The quantity $\mathcal{I}_\kappa^2$ is an important indicator of the limiting performance for hacking large quantum scramblers of a fixed dimension ratio $\kappa$. It also suggests that in the two-user scenario, a larger Hilbert space of Bob relative to Alice's results in a larger $p^{\mathrm{opt}}_\mathrm{hack}$. A single-qubit ancilla ($\kappa=2$) is enough to boost $p^{\mathrm{opt}}_\mathrm{hack}$ all the way to $\approx0.936$. We also find that the PG strategy $\widetilde{\chi}:=\chi={\Tr_L|U^{o\dag}|/\|\Tr_L|U^{o\dag}|\|_2}$ is almost optimal for any $d_A$ and $d_B$. In particular, when $d_A=2=d_B$, we precisely get $\widetilde{\chi}=\chi_\mathrm{opt}=I_B/\sqrt{d_B}$~{(see Appendix~\ref{app:two-qubit})}.

{\it Scrambler hacking and entanglement recycling}.---The operator $U^o$ appears in various scenarios. In some, the input and output systems of the unitary operator $U$ need not match. Let $U:AB\to KL$ be a map of dimension $D=d_Ad_B=d_Kd_L$. By using a probe state ($\chi$) and recovery map $R$ on $LB'$ with matching dimensions, we get the modified hacking fidelity $p_\mathrm{hack}^{(R,\chi)}={|\Tr[R(I_L\otimes \chi)U^o]|^2}/{(d_A^2 d_K)}$. {One remarkable case is where Bob receives continual emission of data packets from the target system.

An example of such a situation is data extraction from the} evaporation of an old black hole, where the probe state is maximally entangled between the inner degrees of freedom of the black hole and all Hawking radiation emitted from the back hole up to that point. The black hole degrees of freedom is typically much larger than those of matter falling into it momentarily. Let the former be $D_B=d_B=d_K$ and the latter be, say, qudit: $d_M=d_A=d_L$. If Bob collects an additional $d_M$-dimensional Hawking radiation, the resulting optimal hacking fidelity is
$p_\mathrm{hack}^\mathrm{BH}=\|U^o\|_1^2/D^2$.

We remark that the dimension of the black hole interior state remains the same since a qudit enters the black hole and another exits it. Depending on the assumptions made on the dynamics of black holes (see \cite{tajima2021symmetry} for the recent discussion on the effect of symmetry for information recovery), there may be an estimated value $1-\epsilon_{\mathrm{BH}}$ of $p_\mathrm{hack}^\mathrm{BH}$. For example, for Haar random $U$, $p_\text{hack}^\text{BH}$ tends to $(8/3\pi)^2\approx 0.72$ for large $D$~\cite{SM_Qhack}. This also serves as a lower bound for the fidelity between the posterior probe state and a maximally entangled state. If one uses a probe state whose maximal fidelity with a maximally entangled state is $f$ for the information extraction of the next qudit falling into the black hole, where optimal hacking fidelity is typically $p_\mathrm{hack}^\mathrm{BH}$ for large $D$, then the optimal hacking fidelity approaches to the product $fp_\mathrm{hack}^\mathrm{BH}$. However, for a given hacking fidelity $p_\mathrm{hack}$ and the fidelities of the extracted quantum data ($f_\mathrm{ext}$), and between the posterior probe state and a maximally entangled state~($f_\mathrm{post}$), the following trade-off relation exists~\cite{SM_Qhack}:
\begin{equation}\label{eqn:fidtrade}
    f_\mathrm{ext} + f_\mathrm{post} \leq 1+ p_\mathrm{hack}.
\end{equation}
So one should choose between accurate data extraction and good entanglement recycling for imperfect hacking $(p_\mathrm{hack}<1)$; giving up the former means inaccurate data extraction for the current round of hacking, and giving up the latter leads to a worse fidelity in the next round. {We remark that the same argument can be applied to any large and generic quantum scrambler, not only to black holes, hence the same issue of entanglement recycling happens there, too. It suggests that if the target quantum scrambler does not allow perfect scrambler hacking, then the quality of the extracted data must be degraded over many rounds extraction.}

\textit{Discussion.---}We proposed a scrambler hacking task, which entails the extraction and replacement of quantum data through limited interaction with a quantum scrambler, and analyzed its performance in terms of the hacking fidelity. In finding good hacking strategies and calculating the optimal hacking fidelity for generic multipartite unitary scramblers, we give explicit operational meaning to {$\pi/2$} tensor rotations of unitary operators. {Moreover, we proved that finding an optimal decoder for this stronger task is equivalent to that for Hayden--Preskill-type protocols.}

While Bob's hacking fidelity on Alice saturates at a nonzero value in the two-network-user scenario, we find that with multiple users, naive attempts to hack any single user would generally result in exceedingly-low hacking fidelity. To improve the hacking success, it is necessary to perform program modifications to $U$, akin to hackers introducing malware to control classical computers. For quantum networks, quantum circuits would constitute such a program, but since an arbitrary quantum program cannot be encoded into a state owing to the no-programming theorem~\cite{nielsen1997program}, Bob would need to supplement his quantum resources with additional classical attacks to improve the hacking fidelity.

As an interesting application, we considered an information-reflection model for black holes, and surveyed the sustainability of black hole mirroring. From our trade-off relation in~(\ref{eqn:fidtrade}) and analysis of black hole hacking, we conclude that the black hole in this model indeed functions as a mirror \cite{hayden2007black}, but its ``reflectivity'' may be gradually degraded over time (See Ref. \cite{oshita2020reflectivity, wang2020echoes} for different notions of reflectivity of quantum black holes). One could collect more Hawking radiation to increase the hacking fidelity, but that necessitates an entanglement reduction~\cite{hayden2007black}, thereby leading to yet another type of reflectivity degradation. This questions whether quantum information gets destroyed when it falls into an `older' black hole that has already reflected a significant amount of quantum information that fell into it. 

\begin{acknowledgments}
This work is supported by the National Research Foundation of Korea (NRF-2019R1A6A1A10073437, NRF-2019M3E4A1080074, NRF-2020R1A2C1008609, NRF-2020K2A9A1A06102946) via the Institute of Applied Physics at Seoul National University and by the Ministry of Science and ICT, Korea, under the ITRC (Information Technology Research Center) support program (IITP-2020-0-01606) supervised by the IITP (Institute of Information \& Communications Technology Planning \& Evaluation).
\onecolumn \newpage

\appendix

\section{Implausibility of arbitrary quantum-state installation}
\label{app:imp}
For any $d$-dimensional quantum channel $\Lambda$, the infidelity $1-\mathcal{F}_{\mathrm{ent}}(\Lambda)=1-\bra{\psi}(\mathcal{I}\otimes\Lambda)(\dyad{\psi}))\ket{\psi}$ for maximally entangled input state $\ket{\psi}=\sum_{i=1}^d\ket{ii}$ and the average infidelity over Haar random pure input state $1-\mathcal{F}_{\mathrm{pure}}(\Lambda)=1-\int d\phi \bra{\phi}\Lambda(\dyad{\phi})\ket{\phi}$ have the following linear dependence \cite{horodecki1999general},
\begin{equation}1-F_\mathrm{ent}(\Lambda)=\frac{d}{d+1}(1-\mathcal{F}_\mathrm{pure}(\Lambda))\,,
\end{equation}
we will use the infidelity $1-F_\mathrm{ent}(\Lambda)$ instead and have the equivalent result without losing generality.

Suppose, with the unitary operators $U$, $P$ and $R$, that the \texttt{SWAP} operator can be approximated with error $\epsilon$ according to

\begin{equation}
\begin{quantikz}[column sep=0.3cm]
\lstick{$A$} & \qw   &\gate[2]{U}&\qw&\qw          \qw\\
\push{\ket{0}_B} & \gate[2]{P}   &    &\gate[2]{R}&\meterD{\Tr}\\
\lstick{$C$} &   &\qw    & &\qw                    \qw
\end{quantikz} \approx_\epsilon \begin{quantikz} [column sep=0.3cm]
\lstick{A}&\gate[swap, style={draw=white}]{}&\qw\\
\lstick{C}&&\qw
\end{quantikz}.
\end{equation}
Here, $\approx_\epsilon$ means that the two circuits are close to each other with error $\epsilon$ in the infidelity for maximally entangled input states. It means that $\Sigma \approx_\epsilon \Lambda$ is equivalent to $F(J_\Sigma,J_\Lambda)\geq 1-\epsilon$ where $J_\mathcal{N}$ is the normalized Choi matrix for quantum channel $\mathcal{N}$. From the Uhlmann theorem \cite{uhlmann1976transition}, it follows that there exists a pure state $\ket{s}_B$ such that
\begin{equation}
\begin{quantikz}[column sep=0.3cm]
\lstick{$A$} & \qw   &\gate[2]{U}&\qw&\qw          \qw\\
\push{\ket{0}_B} & \gate[2]{P}   &    &\gate[2]{R}&\qw\\
\lstick{$C$} &   &\qw    & &\qw                    \qw
\end{quantikz} \approx_{\epsilon} \begin{quantikz} [column sep=0.3cm]
\lstick{$A$}&\gate[swap, style={draw=white}]{}&\qw\\
\lstick{$C$}&&\qw\\
\lstick{$\ket{s}_B$}&\qw&\qw
\end{quantikz}.
\end{equation}
Since the fidelity never decreases under partial trace, it follows that
\begin{equation}
\begin{quantikz}[column sep=0.3cm]
\lstick{$A$}&\qw&\gate[2]{U}&\qw\\
\lstick{$C$}&\gate{\Phi}&&\qw
\end{quantikz} \approx_{\epsilon} { \begin{quantikz} [row sep=0.4cm, column sep=0.3cm]
	\lstick{$A$}&\gate[swap, style={draw=white}]{}&\qw&\qw\\
	\lstick{$C$}&&\gate{\Psi}&\qw
	\end{quantikz}}
\end{equation}
where
\begin{equation}
\begin{quantikz}[column sep=0.3cm] \lstick{$C$}&\gate{\Phi}&\qw\rstick{$B$}\end{quantikz}=
\begin{quantikz}[column sep=0.3cm]
\lstick{$\ket{0}_B$}&\gate[2]{P}&\qw\\
\lstick{C}&&\meterD{\Tr}
\end{quantikz},
\end{equation}
and
\begin{equation}
\begin{quantikz}[column sep=0.3cm] \lstick{$C$}&\gate{\Psi}&\qw\end{quantikz}=
\begin{quantikz}[column sep=0.3cm]
\lstick{$\ket{s}_B$}&\gate[2]{R^\dag}&\meterD{\Tr}\\
\lstick{$C$}&&\qw
\end{quantikz}.
\end{equation}
From the cyclicity of the fidelity for the maximally entangled input state, it follows that
\begin{equation}
\begin{quantikz}[column sep=0.3cm]
\lstick{$A$}&\gate[2]{U}&\qw\\
\lstick{$B$}&&\qw
\end{quantikz} \approx_{\epsilon} \begin{quantikz} [column sep=0.3cm]
\lstick{A}&\gate[swap, style={draw=white}]{}&\gate{\Phi^\dag}&\qw\\
\lstick{B}&&\gate{\Psi}&\qw
\end{quantikz}.
\end{equation}

It can be interpreted that unless the target scrambler $U$ itself is already close to a swapping operator followed by local operations, it is impossible to nearly perfectly substitute the quantum information out of the scrambler. Conversely, if $U$ is close to the \texttt{SWAP} operator with error $\epsilon$ (if the dimensions of $A$ and $B$ do not match, then it can be $\texttt{SWAP} \oplus I$), then by choosing $P$ and $R$ also as \texttt{SWAP} operators, one can achieve data substitution with error $\epsilon$.

\section{Rotation of matrix and fidelity bounds}
\label{app:rotation}

We first show how the fidelity expression
\begin{equation}
p_{\mathrm{hack}}^{(R,\chi)}:=|\bra{\psi}_{AB}\bra{\psi}_{A'B'}R_{BB'}U_{AB}\ket{\psi}_{AA'}\ket{\phi}_{BB'}|^2\;,
\end{equation}
is simplified with the rotated matrix $U^o$. First, note that $p_{\mathrm{hack}}$ is the fidelity between the pure states
\begin{equation}
\begin{quantikz}
&\makeebit[0]{$\dfrac{1}{\sqrt{d_A}}$}&\qw&\qw&\qw\rstick{$A'$} \\
&&\gate[2]{U}&\qw&\qw\rstick{$A$}\\
&\makeebit[0]{}&\qw&\gate[2]{R}&\qw\rstick{$B$}\\
&&\gate{\chi}&&\qw\rstick{$B'$}
\end{quantikz}\text{ and }
\begin{quantikz}[row sep=0.75cm]
&\makeebit[0]{}&\qw&\qw\rstick{$A'$} \\
&\makeebit[0]{$\dfrac{1}{d_A}$}&\makeebit[0]{}&\qw\rstick{$A$}\\
&\makeebit[0]{}&&\qw\rstick{$B$}\\
&&\qw&\qw\rstick{$B'$}
\end{quantikz}.
\end{equation}
Here, $\!\!\!\!\!\!\!\!\begin{quantikz}[row sep=0.3cm]&\makeebit[0]{}&\qw\\&&\qw\end{quantikz}=\sum_i \ket{ii}$ represents an unnormalized maximally entangled state with the appropriate Schmidt number for the system it is defined on. Therefore, it can be expressed with a tensor network diagram.
\begin{equation} \label{eqn:phackdiag}
p_{\mathrm{hack}}^{(R,\chi)}=\frac{1}{d_A^{3}}\!\!\!\!\!
\begin{quantikz}
&\makeebit[0]{}&\qw&\qw&\qw&\qw&\qw\makeebit[0]{} \\
&&\gate[2]{U}&\qw&\qw\makeebit[0]{}&&\makeebit[0]{}\\
&\makeebit[0]{}&\qw&\gate[2]{R}&\qw&&\makeebit[0]{}\\
&&\gate{\chi}&&\qw&\qw&\qw
\end{quantikz}
\;,
\end{equation}
with the equivalence of $\ket{\phi}_{BB'}$ and $\chi$:
\begin{equation}
\ket{\phi}_{BB'}=\begin{quantikz}
\makeebit[0]{}&\qw&\qw\\
&\gate{\chi}&\qw
\end{quantikz}\;.
\end{equation}
We remark that the time flows from left to right in the diagram, as opposed to the matrix multiplication order. The definition of $U^o$ can be expressed in a circuit diagram as follows:
\begin{equation}
\begin{quantikz}
&\gate[2]{U^o}&\qw\\
&&\qw\\
\end{quantikz}=
\begin{quantikz}
&\qw&\qw\makeebit[0]{}\\
&\gate[2]{U}&\qw\\
\makeebit[0]{}&&\qw\\
&\qw&\qw
\end{quantikz}
\;.
\end{equation}
By plugging this diagram into Eq. (\ref{eqn:phackdiag}), we get
\begin{equation}
p_{\mathrm{hack}}^{(R,\chi)}=\frac{1}{d_A^{3}}\!\!\!\!\!
\begin{quantikz}
&\makeebit[0]{}&\qw&\qw&\qw&\qw\makeebit[0]{}\\
&&\gate[2]{U^o}&\qw&\gate[2]{R}&\qw\\
&\makeebit[0]{}&&\gate{\chi}&&\qw\makeebit[0]{}\\
&&\qw&\qw&\qw&\qw
\end{quantikz}\;.
\end{equation}
This is equivalent to the expression
\begin{equation}
p_\mathrm{hack}^{(R,\chi)}=\frac{|\Tr[R(I_B \otimes \chi)U^o]|^2}{d_A^3}.
\end{equation}

Since $\sqrt{d_B}\|\Tr_B|U^{o\dag}|\|_2\geq\|\Tr_B|U^{o\dag}|\|_1=\|U^{o}\|_1$, $p_\mathrm{hack}^\textsc{PG}=\|\Tr_B|U^{o\dag}|\|_2^2/d_A^3$ is higher than $p^{\mathrm{ME}}_\mathrm{hack}=\|U^o\|_1^2/(d_A^3d_B).$ Also, since the PG strategy is a particular strategy, the fidelity of it is not larger than that of the optimal strategy, so we have $\PG \leq \opt$. In summary, we have

\begin{equation}
\ME \leq \PG \leq \opt.
\end{equation}

When $d_A=d_B=d,$ $\opt$ is the maximal fidelity between a maximally entangled state with the Schmidt rank $d^2$ and a pure state of the form $\Omega_\chi:=d(U_{AB}\otimes\chi_{B'})\dyad{\psi}^{\otimes 2}_{AA'BB'}(U_{AB}\otimes\chi_{B'})^\dag$ with $\|\chi\|_2=1$. Let $\chi_M$ be a $\chi$ that achieves the maximum.  Since the partial trace never decreases the fidelity, by tracing out systems other than $B'$, we get $F(I_{B'}/d,|\chi_M|^2)\geq \opt$. Let the recovery map that achieves the optimal fidelity be $R_\mathrm{opt}$ and let $\Theta_{V}:= V_{BB'}\dyad{\psi}_{AA'BB'}^{\otimes2} V_{BB'}^\dag$ for any bipartite unitary operator $V_{BB'}$, so that $\opt = F(\Omega_{\chi_M},\Theta_{R_\mathrm{opt}})$. Since there is a freedom of local unitary operation to the choice of $R_\mathrm{opt}$ and $\chi_M$, without loss of generality, we can assume that $\chi_M$ is positive semi-definite so that $\Tr \chi_M = \Tr |\chi_M|.$

From the following relation for arbitrary pure quantum states $\ket{\eta_1}$ and $\ket{\eta_2}$,
\begin{equation} \label{eqn:puredist}
\frac{1}{2}\|\dyad{\eta_1}-\dyad{\eta_2}\|_1 = \sqrt{1-|\bra{\eta_1}\ket{\eta_2}|^2},
\end{equation}
we have $\sqrt{1-\ME}=\min_W \|\Theta_W-\Omega_{\textsc{ME}}\|_1/2$, where $\Omega_\textsc{ME}:=\Omega_{{I_{B'}}/{\sqrt{d}}}$. Therefore $\sqrt{1-\ME}\leq\|\Theta_{R_\mathrm{opt}}-\Omega_\textsc{ME}\|_1/2$. Because of the triangular inequality, we have  $\|\Theta_{R_\mathrm{opt}}-\Omega_\textsc{ME}\|_1 \leq \|\Theta_{R_\mathrm{opt}}-\Omega_{\chi_M}\|_1+\|\Omega_{\chi_M}-\Omega_\textsc{ME}\|_1$. By Eq. (\ref{eqn:puredist}), we have $\|\Theta_{R_\mathrm{opt}}-\Omega_{\chi_M}\|_1/2 = \sqrt{1-\opt}$ and  $\|\Omega_{\chi_M}-\Omega_\textsc{ME}\|_1/2 = \sqrt{1-d^{-1}(\Tr \chi_M)^2}= \sqrt{1-F(I_{B'}/d,|\chi_M|^2)} \leq \sqrt{1-\opt}$. As a result, we have $\sqrt{1-\ME}\leq 2\sqrt{1-\opt}$ thus $1-\ME \leq 4(1-\opt).$

The trade-off relation between the data extraction fidelity $f_\mathrm{ext}$ and the posterior probe state fidelity $f_\mathrm{prob}$

\begin{equation}
    f_{\mathrm{ext}}+f_{\mathrm{prob}}\leq 1+\opt,
\end{equation}
directly follows from the following result.
\begin{theorem}
	For arbitrary bipartite quantum state $\rho_{AB}$ and pure states $\ket{\psi}_A$ and $\ket{\phi}_B$, let $F_A:=\bra{\psi}\rho_A\ket{\psi}$, $F_B:=\bra{\phi}\rho_B\ket{\phi}$ and $F_{AB}:=\bra{\psi}_A\bra{\phi}_B\rho_{AB}\ket{\psi}_A\ket{\phi}_B$. Then the following inequality holds
	\begin{equation}
	F_A+F_B\leq 1+F_{AB}.
	\end{equation}
\end{theorem}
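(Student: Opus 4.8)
The plan is to recast the three numbers $F_A$, $F_B$, $F_{AB}$ as expectation values of orthogonal projectors and then reduce the claimed inequality to a single positivity statement. Writing $P:=\dyad{\psi}_A$ and $Q:=\dyad{\phi}_B$, and using $\rho_A=\Tr_B\rho_{AB}$, $\rho_B=\Tr_A\rho_{AB}$, one has immediately
\begin{equation}
F_A=\Tr[(P\otimes I_B)\rho_{AB}],\quad F_B=\Tr[(I_A\otimes Q)\rho_{AB}],\quad F_{AB}=\Tr[(P\otimes Q)\rho_{AB}].
\end{equation}
Since $\Tr\rho_{AB}=1$, the target inequality $F_A+F_B\leq 1+F_{AB}$ is therefore equivalent to the single statement
\begin{equation}
\Tr\!\big[(I_A\otimes I_B-P\otimes I_B-I_A\otimes Q+P\otimes Q)\,\rho_{AB}\big]\geq 0.
\end{equation}

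The key step I would highlight is the factorization of the operator appearing in this trace. One checks by direct expansion that
\begin{equation}
I_A\otimes I_B-P\otimes I_B-I_A\otimes Q+P\otimes Q=(I_A-P)\otimes(I_B-Q).
\end{equation}
Because $P$ and $Q$ are rank-one projectors, $I_A-P$ and $I_B-Q$ are themselves (orthogonal) projectors, hence positive semidefinite; their tensor product is then a positive semidefinite operator on $AB$.

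To finish, I would invoke the elementary fact that the trace of a product of two positive semidefinite operators is nonnegative: writing $M:=(I_A-P)\otimes(I_B-Q)\succeq 0$ and $\rho_{AB}\succeq 0$, we have $\Tr[M\rho_{AB}]=\Tr[\sqrt{M}\,\rho_{AB}\,\sqrt{M}]\geq 0$ since $\sqrt{M}\,\rho_{AB}\,\sqrt{M}$ is positive semidefinite. This establishes the displayed positivity and hence the theorem. There is no genuine analytical obstacle here; the only nontrivial move is \emph{spotting} the complementary-projector factorization $(I_A-P)\otimes(I_B-Q)$, after which positivity does all the work. If I wanted an even more self-contained argument avoiding the $\Tr[MN]\ge 0$ lemma, I could instead bound each term by purifying or by restricting $\rho_{AB}$ to the two-dimensional span relevant to $P$ and $Q$, but the factorization route is the cleanest and I would present that as the main proof.
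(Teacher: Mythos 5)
Your proof is correct and is essentially identical to the paper's: both reduce the inequality to the nonnegativity of $1-F_A-F_B+F_{AB}=\Tr[\rho_{AB}\,(I_A-\dyad{\psi}_A)\otimes(I_B-\dyad{\phi}_B)]$, which holds because the complementary projectors' tensor product is positive semidefinite. Your write-up merely spells out the $\Tr[MN]\geq0$ step that the paper leaves implicit.
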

\begin{proof}
It is enough to realize that $1-F_A-F_B+F_{AB}$ equals to $\Tr[\rho_{AB}(\psi^\perp_A\otimes\phi^\perp_B)]$ which is always non-negative. Here, $\psi^\perp_A:=\mathds
{1}_A-\psi_A$ is the projector onto the kernel of $\psi_A$ and similarly for $\phi^\perp_B$.
\end{proof}

Now, consider the black hole radiation problem of \textit{Hacking as entanglement recycling} section. When the probe state is a general mixed bipartite state $\Pi_{BB'}=\sum_i p_i \dyad{\phi_i}_{BB'}$ with $\ket{\phi_i}_{BB'}=\sum_k(I_B\otimes \chi_i)\ket{kk}_{BB'}$, the hacking fidelity is given as
\begin{equation}
p_\mathrm{hack}^{(R,\Pi)}=\sum_i p_i \frac{|\Tr[R (I_B \otimes \chi_i) U^o]|^2}{d_M^2D_B}.
\end{equation}
If the dimension of the Hilberst space of black hole state is large enough, then the PG strategy becomes nearly optimal thus, $p_\mathrm{hack}^{(R,\Pi)}$ reduces to $\sum_i p_i |\Tr[\chi_i \Tr_B |U^{o\dag}|]|^2/d_M^3.$ Moreover, as $D\to \infty$. $\Tr_B|U^{o\dag}|$ converges to $\|U^o\|_1 I_B'/D_B$(See Sec. \ref{app:asymp}), so we have
\begin{align}
\max_R p_\mathrm{hack}^{(R,\Pi)}&\approx\sum_i p_i \frac{|\Tr\chi_i|^2\|U^o\|_1^2}{D_B D^2}\nonumber\\
&=\ME\sum_i p_i \frac{|\Tr\chi_i|^2}{D_B}\nonumber\\
&\approx \opt f_\mathrm{prob}'.
\end{align}
Where $f_\mathrm{prob}'=\sum_i p_i {|\Tr\chi_i|^2}/{D_B}$ is the fidelity between $\Pi_{BB'}$ and a maximally entangled state. Therefore the hacking fidelity is asymptotically the product of the optimal hacking fidelity and $f_\mathrm{prob}'$.

\section{Duality with Hayden--Preskill protocols}
\label{app:dual}
Surprisingly, the seemingly harder problem of finding an optimal scrambler hacking strategy by Bob on Alice is equivalent to  that of a Hayden--Preskill-type protocol of Alice on Bob. In this setting we assume that, instead of $U$, its (computational-basis) transpose $(U^\top)^{ij}_{kl}=U^{kl}_{ij}$ is applied to systems $AB$. Now, Alice wants to extract information from Bob's system $B$. Similar to scrambler hacking, to model such information extraction, we assume that a maximally entangled state $\ket{\psi}_{BB'}$ is fed into $U^\top$. Alice also chooses a maximally entangled state $\ket{\psi}_{AA'}$ as a probe state. Systems $AB$ interact with $U^\top$ and Alice applies a $d_A^2$-dimensional unitary operator $W$ on $AA'$. Alice's goal is to prepare a maximally entangled state on systems $AB'$. The optimal fidelity between the actual and ideal states is $p_{\mathrm{HP}}^\mathrm{opt}=\max_W\bra{\psi}_{AB'}\Tr_{A'B}[\mathcal{W}\circ\mathcal{U}(\dyad{\psi}_{A'ABB'}^{\otimes 2})]\ket{\psi}_{AB'}$.
Here, $\mathcal{W}(\rho):=W_{AA'}\rho\, W_{AA'}^\dag\equiv W\rho\,W^\dag$ and $\mathcal{U}(\rho):= U^\top_{AB}\rho\,U_{AB}^*\equiv U^\top\rho\,U^*$. This expression can also be simplified in terms of the \texttt{SWAP} operator $F$ to
\begin{equation} \label{HPopt}
    p_{\mathrm{HP}}^\mathrm{opt}=\max_W\|\Tr_B[U^o\,W^\top\! F]\|_2^2/(d_A d_B^2).
\end{equation}
It follows that the optimal hacking strategy of 
\begin{equation} \label{opthack}
    p_{\mathrm{hack}}^{\mathrm{opt}}=\max_R\|\Tr_B[U^o R]\|_2^2/d_A^3,
\end{equation}
and the optimal strategy to $p_{\mathrm{HP}}$ are related by $R=W^\top\!F$. Therefore, finding an optimal strategy for scrambler hacking is formally equivalent to finding an optimal strategy for the Hayden--Preskill protocol, in the sense that if one problem is solvable for an arbitrary $U$, then so is the other. It follows that $p_{\mathrm{HP}}^\mathrm{opt}=\opt/\kappa^{2}$ and $p_{\mathrm{HP}}^\mathrm{opt}<1$ when $d_B>d_A$.

\section{Numerical maximization of $p^\mathrm{opt}_\mathrm{hack}$}
\label{app:numer}

{Given a scrambler} described by $U$, it is possible to derive an iterative numerical scheme to obtain the optimal probe state ($\chi_\mathrm{opt}$) that achieves the optimal hacking fidelity $p^\mathrm{opt}_\mathrm{hack}$. Rather than directly solving the numerical problem in~(5) of the main text, we can instead start with $f_\chi=\|({I_L}\otimes\chi)U^o\|_1$, which is the objective function involving the square root of the rightmost side in~(4), and perform a variation with respect to $\chi$. Furthermore, the constraint $\|\chi\|_2=1$ invites the following parametrization $\chi=Z/\|Z\|_2$, such that 
\begin{equation}
\updelta\chi=\dfrac{\updelta Z}{\|Z\|_2}-\dfrac{Z}{2\|Z\|^3_2}{\Tr}[\updelta ZZ^\dag+Z\updelta Z^\dag]\,.
\end{equation}
Upon denoting $M=({I_L}\otimes \chi)U^o$, we consequently have
\begin{align}
\updelta f_\chi=&\,\dfrac{1}{2}{\Tr}\left[\Tr_{B}[|M^\dag|^{-1}MU^{o\dag}]\dfrac{\updelta Z^\dag}{\|Z\|_2}\right]+\mathrm{c.c.}\nonumber\\
&\,-\dfrac{1}{2}\Tr|M^\dag|\dfrac{{\Tr}[\updelta ZZ^\dag+Z\updelta Z^\dag]}{\|Z\|^2_2}\,,
\end{align}
which leads to the operator gradient
\begin{equation}
\dfrac{\updelta f_\chi}{\updelta Z^\dag}=\dfrac{1}{2\|Z\|_2}\left(\Tr_{B}[|M^\dag|^{-1}MU^{o\dag}]-\Tr|M^\dag|\dfrac{Z}{\|Z\|_2}\right)
\end{equation}
with respect to $Z^\dag$. Setting it to zero would then gives the extremal equation
\begin{equation}
\chi=\dfrac{\Tr_{B}[|M^\dag|^{-1}MU^{o\dag}]}{\|\Tr_{B}[|M^\dag|^{-1}MU^{o\dag}]\|_2}\,,
\label{eqn:ext_eqn}
\end{equation}
which may alternatively be gotten from reasoning with the Cauchy-Schwarz inequality. As $\|({I_L}\otimes\chi)U^o\|_1$ is concave in $\rho_{B'}=\chi^\dag\chi$, one can generally expect a convex solution set of $\rho_{B'}$'s that solve \eqref{eqn:ext_eqn}, all of which give the unique maximal fidelity $p^\mathrm{opt}_\mathrm{hack}$.

In other words, $p^\mathrm{opt}_\mathrm{hack}$ is achieved when a solution $\chi=\chi_\mathrm{opt}$ for Eq.~\eqref{eqn:ext_eqn} is obtained. While there are no known closed-form expressions for this solution, we can nevertheless find explicit analytical forms for certain special cases. The most immediate one happens to be the limiting case $d_B\rightarrow\infty$, whence we have $\chi_\mathrm{opt}\rightarrow I_{B}/\sqrt{d_{B}}$, since in this limit, {$\Tr_BO\rightarrow I_{B}\Tr O/d_B$ for any bipartite operator $O$} of systems $BB'$. For finite $d_B$, we may still have an estimate for $\chi_\mathrm{opt}\approx\widetilde{\chi}$. A straightforward way to do this is to simply iterate the extremal equation \eqref{eqn:ext_eqn} once by substituting $I_{B}/\sqrt{d_{B}}$ for $\chi$ on the right-hand side. This gives us $\widetilde{\chi}=\Tr_B|U^{o\dag}|/\|\Tr_{B}|U^{o\dag}|\|_2$, which is in practice very close to $\chi_\mathrm{opt}$.

{In practice, iterating Eq.~\eqref{eqn:ext_eqn} usually results in good convergence to $\chi_\mathrm{opt}$. For the pedantic, we may additionally} adopt the steepest-ascent methodology and require that $\updelta f_\chi=\Tr\left[(\updelta f_\chi/\updelta Z^\dag)\updelta Z^\dag+\updelta Z(\updelta f_\chi/\updelta Z)\right]\geq0$. This amounts to defining the increment $\updelta Z:=\epsilon\,\updelta f_\chi/\updelta Z^\dag$ for some small real $\epsilon>0$ that functions as a fixed iteration step size. This allows us to state the iterative equations
\begin{align}
Z_{k+1}=&\,\left(1-\dfrac{\epsilon}{2}\dfrac{\Tr|M_k^\dag|}{\|Z_k\|_2}\right)Z_k+\dfrac{\epsilon}{2}\Tr_{B}[|M_k^\dag|^{-1}M_kU^{o\dag}]\,,\nonumber\\
\chi_{k+1}=&\,\dfrac{Z_{k+1}}{\|Z_{k+1}\|_2}
\end{align}
that can be used to converge $\chi_k$ to $\chi_\mathrm{opt}$ starting with $Z_1=I_B$, where a factor of $\|Z_k\|_2$ has been neglected for a suitably chosen magnitude of $\epsilon$. As $\updelta f_\chi=2\epsilon\Tr[|\updelta f_\chi/\updelta Z|^2]>0$ by construction, convergence is guaranteed as long as $\epsilon$ is sufficiently small. Operationally, one can afford to choose a reasonably large $\epsilon$ to increase the convergence rate.

\section{{Optimal hacking} of two-qubit quantum scramblers}
\label{app:two-qubit}

The case where {$d_A=d_B=d_K=d_L=2$} presents the unique situation in which one can confirm, indeed, that $\chi_\mathrm{opt}=I_B/\sqrt{d_B}$. To this end, we proceed to construct the exact expression of $|U^{o\dag}|$. Since $UU^\dag=I$, in terms of the product computational basis $\bra{jk}U\ket{lm}=U^{jk}_{lm}$, the basic relation
\begin{equation}
\sum^1_{l,m=0}U^{j_1k_1}_{lm}U^{j_2k_2*}_{lm}=\delta_{j_1,j_2}\delta_{k_1,k_2}
\label{eqn:unitarity}
\end{equation}
shall be immensely useful in the subsequent discussion.

Using Eq.~\eqref{eqn:unitarity}, we first note that the product
\begin{align}
U^oU^{o\dag}=&\sum_{l,m,m'}\!\Big[\ket{0,m}(U^{00}_{lm}U^{00*}_{lm'}+U^{10}_{lm}U^{10*}_{lm'})\bra{0,m'}\nonumber\\[-1ex]
&\,\qquad+\ket{1,m}(U^{01}_{lm}U^{01*}_{lm'}+U^{11}_{lm}U^{11*}_{lm'})\bra{1,m'}\nonumber\\
&\,\qquad+\ket{0,m}(U^{00}_{lm}U^{01*}_{lm'}+U^{10}_{lm}U^{11*}_{lm'})\bra{1,m'}\nonumber\\
&\,\qquad+\ket{1,m}(U^{01}_{lm}U^{00*}_{lm'}+U^{11}_{lm}U^{10*}_{lm'})\bra{0,m'}\Big]\nonumber\\
\,\widehat{=}&\,\begin{pmatrix}
\mathbf{A} & \mathbf{B}^\dag\\
\mathbf{B} & \mathbf{A}^{-1}\mathrm{Det}\mathbf{A}
\end{pmatrix}
\end{align}
may be characterized, in the product computational basis, by only two $2\times2$ matrices in a highly specific manner, where $\mathbf{B}$ is traceless. Such a structure is absent in higher dimensions. For convenience, we may rewrite
\begin{equation}
U^oU^{o\dag}\,\widehat{=}\,\bm{1}+\begin{pmatrix}
\bm{a\cdot\sigma} & \bm{b}^*\bm{\cdot\sigma}\\
\bm{b\cdot\sigma} & -\bm{a\cdot\sigma}
\end{pmatrix}
\end{equation}
in terms of dot products ($\bm{v\cdot w}=\bm{v}^\top\bm{w}$) of the vectorial parameters $\bm{a}$ and $\bm{b}$ with the standard vector of Pauli operators $\bm{\sigma}=(\sigma_x,\sigma_y,\sigma_z)^\top$ to separate the matrix representation of $U^oU^{o\dag}$ into the identity and another $4\times4$ traceless matrix, where $\bm{a}$ is real and $\bm{b}$ complex.

With the identity $(\bm{a\cdot\sigma})(\bm{a}'\bm{\cdot\sigma})=\bm{a\cdot a}'\bm{1}+\I\,\bm{a\times a}'\bm{\cdot\sigma}$, it is a straightforward matter to verify that $|U^{o\dag}|$ has the same matrix-representation structure, where all its parameters satisfy the following conditions:
\begin{align}
|U^{o\dag}|\,\widehat{=}&\,c'\bm{1}+\begin{pmatrix}
\bm{a}'\bm{\cdot\sigma} & \bm{b}'^*\bm{\cdot\sigma}\\
\bm{b}'\bm{\cdot\sigma} & -\bm{a}'\bm{\cdot\sigma}
\end{pmatrix}\,,\nonumber\\
1=&\,c'^2+|\bm{a}'|^2+|\RE{\bm{b}'}|^2+|\IM{\bm{b}'}|^2\,,\nonumber\\
\bm{a}=&\,2\,c'\bm{a}'-2\,\RE{\bm{b}'}\bm{\times}\IM{\bm{b}'}\,,\nonumber\\
\RE{\bm{b}}=&\,2\,c'\RE{\bm{b}'}-2\,\IM{\bm{b}'}\bm{\times}\bm{a}'\,,\nonumber\\
\IM{\bm{b}}=&\,2\,c'\IM{\bm{b}'}-2\,\bm{a}'\bm{\times}\RE{\bm{b}'}\,.
\end{align}
Here, $\RE{\cdot}$ and $\IM{\cdot}$ respectively denote the real and imaginary parts of the argument. Evidently, in this fortuitously easy yet general two-qubit scenario, we find that $\Tr_B|U^{o\dag}|=2\,c'I_B$, such that $\widetilde{\chi}=\Tr_B|U^{o\dag}|/\|\Tr_B|U^{o\dag}|\|_2=I_B/\sqrt{d_B}=\chi_\mathrm{opt}$.

\section{Asymptotic formulas for $p^\mathrm{opt}_\mathrm{hack}$} \label{app:asymp}

{For the problem of quantum-information extraction involving unitary scrambling dynamics described by a $d_Ad_B$-dimensional unitary operator $U$, we may consider a general a very general setting where the bipartite output dimensions of $U$ are respectively $d_K$ and $d_L$ for systems $A$ and $B$, such that clearly $d_Ad_B=d_Kd_L$ owing to unitarity. In the asymptotic limit $d_A,d_B\rightarrow\infty$, according to the discussions in Sec.~\ref{app:numer}}, the corresponding optimal scrambler hacking fidelity takes the form {$p^\mathrm{opt}_\mathrm{hack}\rightarrow\|U^o\|^2_1/(d_A^2d_Bd_K)$}. The analytical form of its average value then necessitates {the calculation of} the average term $\overline{\|U^o\|^2_1}$ over all random $U$'s distributed according to {some specific distribution measure, which we fix to be the Haar measure}. We emphasize that since $U^o$ is represented by a $d_Bd_L\times d_Ad_K$ matrix that is obtained from just a sequence of index swapping operations, such a rectangular matrix still retains the statistical properties of a Haar unitary matrix elements, namely $\overline{U^{ojk}_{\hphantom{o}lm}}=0$ and $\overline{|U^{ojk}_{\hphantom{o}lm}|^2}=1/(d_Ad_B)$~\cite{mehta2004random}. If we additionally suppose that {$\kappa\equiv\sqrt{d_Bd_L/(d_Ad_K)}=d_B/d_K\geq1$}, then in the {dimensional} asymptotic limit, {the random Haar unitary ensemble has elements that are so weakly correlated that they are approximately independently and identically distributed~\cite{Jonnadula2020entanglement,Mingo2021asymptotic}. Each eigenvalue $(\sigma_j)$ of the positive operator $\kappa^{-1} U^{o\dag}U^o$ shall then independently follow} the Mar{\v c}enko--Pastur distribution~\cite{marcenko1967eigenvalues}:
\begin{equation}
\sigma_j\sim\dfrac{1}{2\pi}\dfrac{\sqrt{(\lambda_+-x)(x-\lambda_-)}}{\lambda x}\,,\quad \lambda_{\pm}=(1\pm\sqrt{\lambda})^2\,,
\label{eqn:MPdist}
\end{equation}
{which is specified by the distribution's characteristic variable} $\lambda=\kappa^{-2}$. With these,
{\begin{align}
\overline{\|U^o\|^2_1}=&\,\kappa\left(\sum^{d_Ad_K}_{j=1}\overline{\sigma_j}+\sum_{j\neq k}\overline{\sqrt{\sigma_j}}\,\overline{\sqrt{\sigma_k}}\right)\nonumber\\
=&\,\kappa\left[d_Ad_K+(d_Ad_K-1)d_Ad_K\mathcal{I}^2_\kappa\right]\,,
\end{align}}where $\overline{\,\,\cdot\,\,\vphantom{M}}$ now translates to an average with respect to the distribution in~\eqref{eqn:MPdist}{, and we have used the fact that $\overline{\sigma_j}=1$}. The quantity $\mathcal{I}_\kappa=\overline{x^{1/2}}$ refers to the half-moment of this distribution. For completeness, we evaluate the $m$th moment: 
\begin{align}
&\,\overline{x^m}=\int^{\lambda_+}_{\lambda_-}\,\dfrac{\D x}{2\pi\lambda}\,x^{m-1}\sqrt{(\lambda_+-x)(x-\lambda_-)}\nonumber\\
=&\,\dfrac{2}{\pi}(1+\lambda)^{m-1}\int^{1}_{-1}\,\D t\left(1+\dfrac{2\sqrt{\lambda}}{1+\lambda}\,t\right)^{m-1}\sqrt{1-t^2}\nonumber\\
=&\,\left(1+\dfrac{1}{\kappa^2}\right)^{m-1}{}_2\mathrm{F}_1\left(\dfrac{1-m}{2},1-\frac{m}{2};2;\left(\dfrac{2/\kappa}{1+1/\kappa^2}\right)^2\right)\,.
\label{eqn:mth}
\end{align}
The variable substitution $x=(\lambda_++\lambda_-)/2+(\lambda_+-\lambda_-)t/2$ has been introduced after the second equality in \eqref{eqn:mth}. We emphasize that the last equality in \eqref{eqn:mth} is valid for \emph{any real} $m$ so long as the previous $t$~integral converges. {Hence, $\overline{x^m}$ is proportional to a \emph{hypergeometric function} ${}_2\mathrm{F}_1(\,\cdot\,,\,\cdot\,;\,\cdot\,;\,\cdot\,)$.} Upon using the identity~\cite{SpFuncBk}
\begin{equation}
{}_2\mathrm{F}_1(2a,2a+1-\gamma;\gamma;z)=\dfrac{{}_2\mathrm{F}_1\left(a,a+\dfrac{1}{2};\gamma;\dfrac{4z}{(1+z)^2}\right)}{(1+z)^{2a}}\,,
\end{equation}
we get $\overline{x^m}={}_2\mathrm{F}_1(1-m,-m;2;{\kappa^{-2}})$. {That $\overline{x}=1$ follows immediately from a direct evaluation of the hypergeometric function.} Thereafter, the substitution $m=1/2$ nabs us the final answer $\mathcal{I}_\kappa={}_2\mathrm{F}_1\left({2^{-1},-2^{-1};2;\kappa^{-2}}\right)$, so that
{\begin{equation} \overline{p^{\mathrm{opt}}_\mathrm{hack}}\approx\mathcal{I}_\kappa^2+\dfrac{1}{d_Ad_K}(1-\mathcal{I}_\kappa^2)\quad\mathrm{for}\,\,\kappa\geq1\,. 
\end{equation}}Moreover, we may simplify this expression further by considering a moderately large $\kappa$, for which the hypergeometric function has the simple second-order approximation $\mathcal{I}_\kappa\approx1-1/(8\kappa^2)$. This simplification works amazingly well even for $\kappa=1$---$\mathcal{I}_1=8/(3\pi)\approx0.875$---such that one might as well use this approximation for any $\kappa$.

Now, if $\kappa<1$, one can go through a similar line of argument and arrive at {$\overline{\|U^o\|^2_1}=\kappa^{-1}\left[d_Bd_L+(d_Bd_L-1)d_Bd_L\mathcal{I}^2_{1/\kappa}\right]$}, in which case, we get 
{\begin{equation} \overline{p^{\mathrm{opt}}_\mathrm{hack}}\approx\kappa^2\,\mathcal{I}_{1/\kappa}^2+\dfrac{1}{d_Ad_K}(1-\mathcal{I}_{1/\kappa}^2)\quad\mathrm{for}\,\,\kappa<1\,,
\end{equation}
which} tells us that the asymptotic optimal scrambler hacking fidelity is going to be smaller than that when {$\kappa\geq1$.}

\end{acknowledgments}

\bibliographystyle{unsrtnat}
\bibliography{main}
\end{document}